\theoremstyle{plain}
\newtheorem{theorem}{Theorem}[section]
\newtheorem{lemma}[theorem]{Lemma}
\newtheorem{proposition}[theorem]{Proposition}
\theoremstyle{definition}
\newtheorem{definition}[theorem]{Definition}
\newtheorem{example}[theorem]{Example}
\newcommand{\CC}[2]{\mathcal{C}_{c}(#1,#2)}	
\newcommand{\CS}[1]{\mathcal{C}_{c}(#1,#1)}	
\newcommand{\CV}{\downarrow}                	
\newcommand{\D}[2]{\mathcal{D}(#1,#2)}      	
\newcommand{\F}[1]{\mathcal{#1}}            	
\newcommand{\N}[1]{\overrightarrow{#1}}     	
\newcommand{\NB}[1]{\F{N}_{#1}}             	
\newcommand{\ST}{:}                     			
\begin{document}
\title{Differential Calculus on Cayley Graphs}
\author[Patten, et al.]{Daniel R. Patten}
\address{USAF Research Laboratory Information Directorate\\
    Rome NY 13441--4514}
\thanks{The first author is supported by the National Research Council.} 
\email{drpatten@syr.edu}
\author[]{Howard A. Blair}
\address{Department of Electrical Engineering and Computer Science\\
    Syracuse University\\
    Syracuse NY 13244--4100}
\email{blair@ecs.syr.edu}
\author[]{David W. Jakel}
\address{Department of Electrical Engineering and Computer Science\\
    Syracuse University\\
    Syracuse NY 13244--4100}
\email{dwjakel@juno.com}
\author[]{Robert J. Irwin}
\address{Department of Electrical Engineering and Computer Science\\
    Syracuse University\\
    Syracuse NY 13244--4100}
\email{rjirwin@syr.edu}

\keywords{convergence space, reflexive digraph, Cayley graph, differential calculus, group, Boolean differential calculus}
\subjclass[2010]{Primary: 54A20, Secondary: 39A12}
\date{\today}

\begin{abstract}
We conservatively extend classical elementary differential calculus to the Cartesian closed category of convergence spaces.  By specializing results about the convergence space representation of directed graphs, we use Cayley graphs to obtain a differential calculus on groups, from which we then extract a Boolean differential calculus, in which both linearity and the product rule, also called the Leibniz identity, are satisfied.
\end{abstract}
\maketitle

\section{Introduction}

In 2007, Blair, et al., \cite{MR2389716} conservatively extended (in the sense of Shoenfield \cite{MR1809685}) the concept of \emph{differential} from the spaces of classical analysis to arbitrary convergence spaces.  To define the conditions under which functions are locally differentiable, linear-like structure was extracted from subgroups of the automorphism groups of the convergence spaces.  In this paper, however, we eliminate the apparatus of automorphism groups by appropriately restricting the definition of \emph{differential}.  Importantly, this restriction is also a conservative extension of the concept of \emph{differential} to convergence spaces.

The present work originates from Patten \cite{dP14}, to which we refer the reader for terminology, notation, and basic results from the theory of convergence spaces.  For ease of reference, however, we present in Section \ref{SecGrpsConv} preliminary results about the convergence space representation of reflexive digraphs and its specialization to Cayley graphs.  In Section \ref{SecDffrntls}, we restrict the definition of \emph{differential} given in \cite{MR2389716} and verify that this restriction also correctly generalizes the differentials of classical analysis.  By identifying a given group with one of its Cayley graphs, we obtain in Section \ref{SecCayley} differential calculi on groups.  Applying these results to direct sums of $\mathbb{Z}_{2}$, we obtain a Boolean differential calculus, in which differentials satisfy both linearity and the Leibniz identity.

\section{Groups as Convergence Spaces}\label{SecGrpsConv}

\subsection{Convergence Spaces}

A \emph{filter} is a nonempty collection of nonempty sets, closed under reverse inclusion and finite intersection.  A filter is called \emph{principal} if it has a smallest set $A$, and is denoted by $[A]$.  The principal filter $[\{x\}]$ is called the \emph{point filter at $x$}; we omit the braces and denote it by $[x]$.  A maximal filter is referred to as an \emph{ultrafilter}.  Equivalently, a filter $\F{U}$ on $X$ is an ultrafilter if and only if for each $A \subseteq X$, exactly one of $A$ or $X - A$ belongs to $\F{U}$.  Every filter is included in some ultrafilter.  The family of filters on $X$ is denoted by $\Phi(X)$.

A \emph{convergence structure} on a set $X$ is a relation $\CV$ between the filters on $X$ and the points of $X$ such that for each point $p$, the set of all filters related to $p$ is a filter on $\Phi(X)$ that contains $[p]$.  If a filter $\F{F}$ is related to a point $p$, we write $\F{F} \CV p$ and say that $\F{F}$ \emph{converges} to $p$.  A \emph{convergence space} is a set equipped with a convergence structure.

The category of convergence spaces \textbf{CONV} includes the category of topological spaces \textbf{TOP} as a full subcategory.  Each property of topological spaces, therefore, is an instantiation of a property of convergence spaces.  For example, a function between convergence spaces is \emph{continuous} if and only if it preserves filter convergence; if the convergence spaces are topological, then this notion of continuity is equivalent to the familiar one, that is, preservation of open sets under inverse images.

\textbf{CONV}, unlike \textbf{TOP}, is Cartesian closed: the salient distinction here is that \textbf{CONV} is closed under exponents.  In particular, there is a canonical convergence structure on the set of all continuous functions between two convergence spaces.  We exploit the Cartesian closedness of \textbf{CONV} throughout this paper.

See Beattie and Butzmann \cite{MR2327514}, Binz \cite{MR0461418}, or \cite{dP14} for further discussion on the theory of convergence spaces.

\subsection{Reflexive Digraphs}\label{SecDigraphs}

A \emph{reflexive digraph} is a directed graph with a reflexive edge set.  The edge set of a reflexive digraph induces a convergence structure on the vertex set of that reflexive digraph.  This observation provides a basis for extending concepts such as continuity and differentiability from continuous to discrete structures.  In particular, the reflexive closure of a Cayley graph is a reflexive digraph.  Thus, in analogy to choosing a basis for a vector space, we identify a group with one of its non-redundant Cayley graphs, and thereby extend classical analysis to groups.  From this theory we then extract a Boolean differential calculus.

\begin{definition}\label{DefDigraphs}
Let $(V,E)$ be a reflexive digraph.  For each $v \in V$, the \emph{graph neighborhood} of $v$ is the set $\N{v} = \{ u \in V \ST (v,u) \in E \}$.  The \emph{reflexive digraph convergence structure} on $V$ is defined by
\begin{equation*}
        \F{F} \CV v \text{ if and only if } \N{v} \in \F{F}.
\end{equation*}
When no reasonable confusion is likely, we refer to a reflexive digraph $(V,E)$ by $V$.  Unless otherwise noted, we assume that all reflexive digraphs have the reflexive digraph convergence structure.
\end{definition}

We can induce a convergence structure on the vertex set of a reflexive digraph in other ways.  For example, in \cite{MR2389716}, a filter $\F{F}$ converges to a vertex $v$ of a reflexive digraph if and only if $\F{F} = [u]$ for some vertex $u \in \N{v}$; this structure, however, requires the removal of the finite intersection property of convergence structures.

\begin{example}\label{ExmPentacle}
Let $P$ denote the pentacle, as shown in Figure \ref{FigPentacle}, with the reflexive digraph convergence structure.  (In the interest of visual clarity, we draw the reflexive reduction of a reflexive digraph rather than the reflexive digraph itself.)
\begin{figure}
\begin{center}
\begin{pspicture}(-1,-1)(4,4)
    \psdots(0,1.902113)
    \psdots(1.618034,3.077684)
    \psdots(3.236068,1.902113)
    \psdots(2.618034,0)
    \psdots(0.618034,0)
    \psline(0,1.902113)(1.618034,3.077684)
    \psline(1.618034,3.077684)(3.236068,1.902113)
    \psline(3.236068,1.902113)(2.618034,0)
    \psline(2.618034,0)(0.618034,0)
    \psline(0.618034,0)(0,1.902113)
    \psset{ArrowInside=->}
    \psline(0,1.902113)(3.236068,1.902113)
    \psline(3.236068,1.902113)(0.618034,0)
    \psline(0.618034,0)(1.618034,3.077684)
    \psline(1.618034,3.077684)(2.618034,0)
    \psline(2.618034,0)(0,1.902113)
    \uput[180](0,1.902113){4}
    \uput[90](1.618034,3.077684){0}
    \uput[0](3.236068,1.902113){1}
    \uput[270](2.618034,0){2}
    \uput[270](0.618034,0){3}
\end{pspicture}
\end{center}
\caption{The Pentacle.}\label{FigPentacle}
\end{figure}
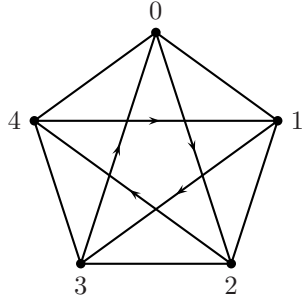
A filter $\F{F} \CV p$ if and only if $\F{F} = [A]$ for some $A \subseteq P$ that does not contain $p + 3 \pmod{5}$.  The pentacle, it should be noted, is a homogeneous pretopological space.  Each pretopological space embeds into some Cartesian product of $P$.
\end{example}

\begin{proposition}\label{PrpDigCntPt}
Let $(V_{1},E_{1})$ and $(V_{2},E_{2})$ be reflexive digraphs, and let $V_{1}$ and $V_{2}$ be the induced convergence spaces, respectively.  A function $f : V_{1} \rightarrow V_{2}$ is continuous at $v \in V_{1}$ if and only if $f(\N{v}) \subseteq \N{f(v)}$.
\end{proposition}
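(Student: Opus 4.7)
The plan is to unfold both sides of the equivalence directly against the reflexive digraph convergence structure from Definition \ref{DefDigraphs} and the standard definition of continuity at a point, namely that $f$ is continuous at $v$ iff $\F{F} \CV v$ implies $f(\F{F}) \CV f(v)$, where $f(\F{F})$ denotes the image filter generated by $\{f(A) \ST A \in \F{F}\}$.

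For the forward direction, I would apply continuity to a single, cleverly chosen test filter. The natural choice is the principal filter $[\N{v}]$: because the digraph is reflexive we have $v \in \N{v}$, so $\N{v} \in [\N{v}]$, which by Definition \ref{DefDigraphs} gives $[\N{v}] \CV v$. Continuity at $v$ then forces $f([\N{v}]) \CV f(v)$. Since $f([\N{v}])$ is the principal filter $[f(\N{v})]$, the condition $\N{f(v)} \in f([\N{v}])$ translates exactly into the desired inclusion $f(\N{v}) \subseteq \N{f(v)}$.

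For the converse, I would take an arbitrary filter $\F{F} \CV v$, so that $\N{v} \in \F{F}$ by Definition \ref{DefDigraphs}. Applying $f$ pointwise, $f(\N{v}) \in f(\F{F})$. The hypothesis $f(\N{v}) \subseteq \N{f(v)}$ combined with closure of filters under supersets yields $\N{f(v)} \in f(\F{F})$, hence $f(\F{F}) \CV f(v)$; since $\F{F}$ was arbitrary, $f$ is continuous at $v$.

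I do not expect any real obstacle: the only subtle ingredient is remembering that reflexivity of the edge set is what guarantees $v \in \N{v}$, which legitimizes using $[\N{v}]$ as a test filter in the forward direction. Everything else is bookkeeping with principal filters and the closure properties of filters.
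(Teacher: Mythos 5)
Your proof is correct; the paper states Proposition \ref{PrpDigCntPt} without proof (deferring to \cite{dP14}), and your argument is exactly the standard one that is intended: test continuity against the principal filter $[\N{v}]$ for the forward direction, and use upward closure of the image filter for the converse. Your closing remark about reflexivity is also on point, since $v \in \N{v}$ is what guarantees $\N{v}$ is nonempty and hence that $[\N{v}]$ is a legitimate filter.
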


Propostion \ref{PrpDigCntPt} states that continuous functions between reflexive digraphs preserve graph neighborhoods; in other words, a function between reflexive digraphs is continuous if and only if it is a graph homomorphism.  Thus, the concepts of graph homomorphism and continuous function are manifestations of the same concept.

Not every convergence space is represented by a reflexive digraph.  For example, the standard topology on $\mathbb{R}$ is not a reflexive digraph.  To the contrary, if the standard topology on $\mathbb{R}$ were a reflexive digraph, then $[\N{x}]$ would be identical to the filter $\NB{x}$, the collection of all topological neighborhoods of $x$.  But the former is principal whereas the latter is non-principal.  Therefore, the standard topology on $\mathbb{R}$ is not a reflexive digraph.  In fact, no non-discrete reflexive digraph on $\mathbb{R}$ is finer than the standard topology on $\mathbb{R}$.

Every finitely generated pretopological space, however, can be represented by a reflexive digraph: for each point $p$ of a finitely generated pretopological space, define $\N{p}$ to be the set that generates the neighborhood filter $\NB{p}$ of $p$.  Subsequently, we treat all finitely generated pretopological spaces, which include all finite convergence spaces, as reflexive digraphs.  

Lowen-Colebunders and Sonck \cite{eL93} first showed that reflexive digraphs---in the present nomenclature---are exponential in \textbf{PTOP}, the category of pretopological spaces.  Applying a theorem of Nel \cite{lN77} to this result, they concluded that \textbf{REDI}, the category of reflexive digraphs, is Cartesian closed.  In particular, we have:

\begin{proposition}\label{PrpCCnvReDi}
If $X$ and $Y$ are reflexive digraphs and $f, g \in \CC{X}{Y}$, then $f \in \N{g}$ if and only if $f(a) \in \N{g(b)}$ whenever $a \in \N{b}$ for each $a$ and $b$ in $X$.
\end{proposition}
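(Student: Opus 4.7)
The plan is to reduce both directions to Proposition \ref{PrpDigCntPt} by exploiting the Cartesian closed structure of \textbf{REDI}, without ever opening the black box of how the exponential $\CC{X}{Y}$ is constructed internally. Two ingredients from Cartesian closedness suffice: the evaluation morphism $\mathrm{ev} \ST \CC{X}{Y} \times X \to Y$, $\mathrm{ev}(h,x) = h(x)$, is continuous; and a function $\varphi \ST Z \to \CC{X}{Y}$ is continuous if and only if its transpose $\hat{\varphi} \ST Z \times X \to Y$ is continuous. I will also use that products in \textbf{REDI} are componentwise, so that $\N{(u,v)} = \N{u} \times \N{v}$.

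For the forward direction, suppose $f \in \N{g}$ and $a \in \N{b}$. Then $(f,a) \in \N{g} \times \N{b} = \N{(g,b)}$ in $\CC{X}{Y} \times X$. Applying Proposition \ref{PrpDigCntPt} to the continuous map $\mathrm{ev}$ at the vertex $(g,b)$ gives $\mathrm{ev}(\N{(g,b)}) \subseteq \N{\mathrm{ev}(g,b)}$, that is, $f(a) = \mathrm{ev}(f,a) \in \N{g(b)}$, as required.

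For the reverse direction, let $f \in \CC{X}{Y}$ satisfy the hypothesis and take the two-vertex reflexive digraph $I = \{0,1\}$ with edges $\{(0,0),(0,1),(1,1)\}$, so that $\N{0} = \{0,1\}$ and $\N{1} = \{1\}$. Define $\varphi \ST I \to \CC{X}{Y}$ by $\varphi(0) = g$ and $\varphi(1) = f$. By Proposition \ref{PrpDigCntPt}, the conclusion $f \in \N{g}$ is equivalent to the continuity of $\varphi$ at $0$, and this in turn, by the exponential adjunction, is equivalent to the continuity of $\hat{\varphi} \ST I \times X \to Y$, $(i,x) \mapsto \varphi(i)(x)$. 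Checking $\hat{\varphi}$ via Proposition \ref{PrpDigCntPt} at the vertices $(0,b)$ and $(1,b)$ amounts to three conditions for each $a \in \N{b}$: $g(a) \in \N{g(b)}$, $f(a) \in \N{g(b)}$, and $f(a) \in \N{f(b)}$. The first and third are the continuity of $g$ and $f$; the second is exactly the standing hypothesis. Hence $\hat{\varphi}$ is continuous, so $\varphi$ is continuous, so $f \in \N{g}$.

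The main obstacle is psychological more than technical: one wants to describe $\N{g}$ inside $\CC{X}{Y}$ directly, which would force us to identify the edge structure that Cartesian closedness produces. The trick above sidesteps this by letting the universal property work both sides of the equivalence, using the probe $I$ to recover edges in the function space from edges in the base $Y$ and, in the other direction, the evaluation map to push edges in the function space forward to edges in $Y$.
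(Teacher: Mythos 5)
Your argument is correct, and it is worth noting that the paper itself offers no proof of Proposition \ref{PrpCCnvReDi}: it is stated as a particular consequence of the Cartesian closedness of \textbf{REDI} established by Lowen-Colebunders and Sonck via Nel's theorem. So there is nothing in the paper to match your proof against step by step; what you have supplied is a self-contained derivation of the explicit edge relation on $\CC{X}{Y}$ from exactly the two black-box facts the paper invokes (continuity of evaluation, and the exponential adjunction), together with Proposition \ref{PrpDigCntPt} and the componentwise description $\N{(u,v)} = \N{u} \times \N{v}$ of products of reflexive digraphs. Both directions check out: the forward direction is immediate from neighborhood-preservation of $\mathrm{ev}$ at $(g,b)$, and in the reverse direction the two-point probe $I$ (the Sierpi\'{n}ski digraph) correctly converts the hypothesis, plus the continuity of $f$ and $g$ (which holds since $f,g \in \CC{X}{Y}$), into continuity of $\hat{\varphi}$ and hence of $\varphi$, and continuity of $\varphi$ at $0$ is equivalent to $f \in \N{g}$ because $g \in \N{g}$ by reflexivity. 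The one point you lean on implicitly is that $\CC{X}{Y}$ is itself a reflexive digraph (so that $\N{g}$ and the product neighborhood formula make sense); that is presupposed by the statement of the proposition and by the cited exponentiality result, so it is fair to assume, but it deserves a sentence. The alternative, more computational route would be to construct the exponential digraph explicitly with the stated edge relation and verify the exponential law directly; your version buys brevity and avoids that construction at the cost of relying on the cited categorical facts.
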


Although all reflexive digraphs are pretopological, only the transitive reflexive digraphs are topological.  Likewise, all $T_{1}$, and hence $T_{2}$, reflexive digraphs are discrete.  On the other hand, many reflexive digraphs are $T_{0}$: for a reflexive digraph to be $T_{0}$, it is both necessary and sufficient that no two of its graph neighborhoods are identical.

\subsection{Cayley Graphs}

\begin{definition}\label{DefCylyGrph}
Let $\Gamma$ be a subset of a group $G$ with identity element $e$ such that each element of $G$ is a product of elements of $\Gamma$ and no element of $\Gamma$ is a product of other elements of $\Gamma$ (that is, $\Gamma$ is \emph{non-redundant}).  We call $\Gamma$ a \emph{generating set} for $G$ and each element of $\Gamma$ a \emph{generator} of $G$.  The \emph{Cayley graph for $G$ generated by $\Gamma$} is the reflexive digraph $C$ such that the vertex set of $C$ is $G$ and the edge set of $C$ is $\{ (g,h) \ST g\gamma = h \text{ and } (\gamma = e \text{ or } \gamma \in \Gamma)\}$.
\end{definition}

\begin{example}\label{ExmCayleyS3}
Consider the Cayley graph of the symmetric group $S_{3}$, as represented in Figure \ref{FigCayleyS3}.
\begin{figure}
\begin{center}
\begin{pspicture}(-1,-1)(7,6)
    \psdots(0,0)(2.140933,1.236068)(3.140933,2.968119)
        (3.140933,5.440255)(4.140933,1.236068)(6.281865,0)
    \psset{arrows=-}
    \psline(0,0)(2.140933,1.236068)
    \psline(3.140933,2.968119)(3.140933,5.440255)
    \psline(4.140933,1.236068)(6.281865,0)
    \psset{ArrowInside=->}
    \psline(3.140933,2.968119)(4.140933,1.236068)
    \psline(4.140933,1.236068)(2.140933,1.236068)
    \psline(2.140933,1.236068)(3.140933,2.968119)
    \psline(3.140933,5.440255)(0,0)
    \psline(0,0)(6.281865,0)
    \psline(6.281865,0)(3.140933,5.440255)
    \uput[180](0,0){$tr$}
    \uput[135](2.140933,1.236068){$r^{2}$}
    \uput[180](3.140933,2.968119){$e$}
    \uput[90](3.140933,5.440255){$t$}
    \uput[45](4.140933,1.236068){$r$}
    \uput[0](6.281865,0){$tr^{2}$}
\end{pspicture}
\end{center}
\caption{A Cayley Graph for the Symmetric Group $S_{3}$.}\label{FigCayleyS3}
\end{figure}
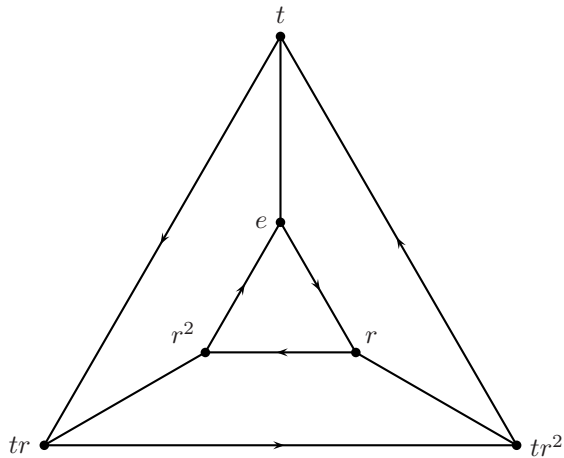
The \emph{graph} automorphisms of $S_{3}$ are precisely the left-multiplications of $S_{3}$, that is, functions of the form $\lambda x.vx$ for some $v \in S_{3}$.  The symmetric group $S_{3}$ is the automorphism group of $S^{3}$, the tertiary Cartesian product of the Sierpi\'{n}ski space .  In this role, however, the space $S_{3}$ is not the reflexive digraph of Figure \ref{FigCayleyS3}, but rather a 6-point discrete space.
\end{example}

Under Definition \ref{DefCylyGrph}, every Cayley graph is a reflexive digraph.  The additional structure of Cayley graphs, however, yields additional properties.  For example, the action of multiplication on the left by a fixed element in a group is an automorphism of the Cayley graph of that group.  With the exception of the Cayley graph for the cyclic group $\mathbb{Z}_{2}$, all nontrivial Cayley graphs are non-topological $T_{0}$ spaces; no nontrivial Cayley graph is $T_{1}$.

See \cite{dP14} for further discussion on the representation of reflexive digraphs and groups by convergence spaces.

\section{Abstract Differentials}\label{SecDffrntls}

In \cite{MR2389716}, the concept of \emph{differential} was extended from the Euclidean spaces of classical analysis to arbitrary convergence spaces.  In \cite{dP14}, we restricted the definition of differential given in \cite{MR2389716} and verified that this restriction conservatively extends elementary differential calculus, that is, it does not change the notion of differential in the context of elementary differential calculus.

\begin{definition}\label{DefDffrntls}
Let $X$ and $Y$ be convergence spaces, let $\D{X}{Y}$ be a subspace of $\CC{X}{Y}$, let $L \in \D{X}{Y}$, let $f:X \rightarrow Y$ be a function, and let $a \in X$.  Then $L$ is a \emph{differential} of $f$ at $a$ if and only if 
\begin{equation*}
	(\forall \F{A} \CV a \text{ in } X)(\exists \F{L} \CV L \text{ in } \D{X}{Y})
	(\forall K \in \F{L})(\exists A \in \F{A})
	(\forall x \in A)(f(x) \in K \cdot \{x\}).
\end{equation*}
\end{definition}

To distinguish between the differentials of elementary differential calculus and those of Definition \ref{DefDffrntls}, we call the former \emph{classical differentials}; the latter, \emph{abstract differentials}.  Theorem \ref{ThmDiffReal} establishes that on the Euclidean line these two concepts are equivalent.

The differentials are not fully determined by $X$ and $Y$---we emphasize that for all  (not necessarily distinct) convergence spaces we \emph{choose} $\D{X}{Y}$ to be a subspace of $\CC{X}{Y}$ so that  the composition of differentials is always a differential.

In the case of functions on the Euclidean line, the usual choice for $\D{\mathbb{R}}{\mathbb{R}}$ is the space of linear functions on $\mathbb{R}$.  Linearity, however, is not a necessary property of classical differentials; rather, the salient property of the space of linear functions on $\mathbb{R}$ is that it is homeomorphic to $\mathbb{R}$.  In this paper, we choose the space of affine functions for $\D{\mathbb{R}}{\mathbb{R}}$.  Of course, the differentials of classical analysis are linear, not affine; our notion of differentials produces an affine approximation of a function at a point.  This, however, is a reasonable trade-off: while we sustain only a marginal loss of simplicity in classical differential calculus, we gain a notion of differential applicable to arbitrary convergence spaces, not merely those convergence spaces with the required linear structure.  Proposition \ref{PrpAffnHome} addresses this issue for affine functions; Proposition \ref{PrpDiffEqls} establishes that a differential of a function into a $T_{1}$ space must be identical to the function at the point of differentiability (which shows that choosing $\D{\mathbb{R}}{\mathbb{R}}$ to be the space of linear functions on $\mathbb{R}$ would restrict severely the class of functions differentiable at a point in $\mathbb{R}$).

\begin{proposition}\label{PrpAffnHome}
Let $f$ be a function on $\mathbb{R}$ and let $a \in \mathbb{R}$.  Define $\mathbb{A}_{f,a}$ to be the subspace of $\CC{\mathbb{R}}{\mathbb{R}}$ consisting of all affine functions on $\mathbb{R}$ that equal $f(a)$ at $a$.  If $\mathbb{R}$ is given the standard topology, then $\mathbb{A}_{f,a}$ is homeomorphic to $\mathbb{R}$.
\end{proposition}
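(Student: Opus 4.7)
The plan is to exhibit an explicit homeomorphism $\varphi\ST\mathbb{R}\to\mathbb{A}_{f,a}$ and use the Cartesian closedness of \textbf{CONV} to verify continuity in both directions. Every affine function $g\ST\mathbb{R}\to\mathbb{R}$ with $g(a)=f(a)$ is determined by its slope $m$ via $g(x)=m(x-a)+f(a)$, so the natural candidate is $\varphi(m)(x)=m(x-a)+f(a)$. This is obviously a bijection, with inverse $\varphi^{-1}(g)=(g(x_{0})-f(a))/(x_{0}-a)$ for any fixed $x_{0}\ne a$.

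For continuity of $\varphi$, I would invoke Cartesian closedness: a map into a function space is continuous if and only if its uncurrying is continuous on the product. Here, the uncurried map $\hat{\varphi}\ST\mathbb{R}\times\mathbb{R}\to\mathbb{R}$ given by $\hat{\varphi}(m,x)=m(x-a)+f(a)$ is continuous as a composition of the usual arithmetic operations on $\mathbb{R}$, and hence $\varphi$ is continuous into $\CC{\mathbb{R}}{\mathbb{R}}$ (its image lands in the subspace $\mathbb{A}_{f,a}$ by construction).

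For continuity of $\varphi^{-1}$, I would pick any $x_{0}\ne a$ and use that the evaluation map $\mathrm{ev}_{x_{0}}\ST\CC{\mathbb{R}}{\mathbb{R}}\to\mathbb{R}$, $g\mapsto g(x_{0})$, is continuous --- this is the standard consequence of Cartesian closedness applied to the continuity of $\mathrm{ev}\ST\CC{\mathbb{R}}{\mathbb{R}}\times\mathbb{R}\to\mathbb{R}$ at the point $x_{0}$. Restricting $\mathrm{ev}_{x_{0}}$ to $\mathbb{A}_{f,a}$ and post-composing with the continuous affine map $t\mapsto(t-f(a))/(x_{0}-a)$ on $\mathbb{R}$ yields exactly $\varphi^{-1}$.

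The argument is essentially routine once Cartesian closedness is invoked; the only subtlety worth flagging is that $\mathbb{A}_{f,a}$ carries the subspace convergence structure from $\CC{\mathbb{R}}{\mathbb{R}}$ (continuous convergence), not an ad hoc topology, so both directions of continuity must be checked at the level of convergence of filters of functions. The currying/uncurrying correspondence handles this uniformly, so no hand calculation with filters is needed. The main (minor) obstacle is simply ensuring that the image of $\varphi$ lies in $\mathbb{A}_{f,a}$ and that the chosen formula for $\varphi^{-1}$ is well-defined independently of the auxiliary point $x_{0}$, both of which are immediate from the slope--intercept parametrisation.
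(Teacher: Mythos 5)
Your proof is correct, but it takes a genuinely different route from the paper's. Both proofs use the same bijection $p\mapsto\lambda x.((x-a)p+f(a))$; the difference is entirely in how bicontinuity is verified. The paper appeals to local compactness of $\mathbb{R}$ and a result of Beattie and Butzmann (Corollary 1.5.17) to identify the continuous convergence structure on $\CS{\mathbb{R}}$ with the compact-open topology, and then checks the homeomorphism against a metric for uniform convergence on compact sets. You instead work purely with the universal property of the exponential: continuity of $\varphi$ follows because its uncurrying $(m,x)\mapsto m(x-a)+f(a)$ is continuous on the product, and continuity of $\varphi^{-1}$ follows from continuity of the evaluation map $\mathrm{ev}_{x_{0}}$ composed with an affine map, for any fixed $x_{0}\neq a$. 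Your argument is leaner and more portable --- it nowhere uses local compactness or metrizability, so it would adapt to affine maps valued in any convergence space with continuous arithmetic --- and it still delivers everything the paper needs downstream (in particular, $\mathbb{A}_{f,a}$ is homeomorphic to a topological space and hence pretopological, which is what Lemma \ref{LmaDifPTopE} requires). What the paper's approach buys in exchange is an explicit description of the convergence structure on $\CS{\mathbb{R}}$ as uniform convergence on compacta, which makes the $\varepsilon$--$\delta$ manipulations in the proof of Theorem \ref{ThmDiffReal} concrete. The only point worth making explicit in your write-up is that a map into the subspace $\mathbb{A}_{f,a}$ is continuous precisely when its composite with the inclusion into $\CS{\mathbb{R}}$ is continuous (the subspace structure being initial), which you implicitly use in the forward direction.
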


\begin{proof}
Define $\phi:\mathbb{R} \rightarrow \mathbb{A}_{f,a}$ by $\phi(p) = \lambda x.((x - a)p + f(a))$ for every $p \in \mathbb{R}$.  It is clear that $\phi$ is a bijection.  Since $\mathbb{R}$ is locally compact, by Corollary 1.5.17 of \cite{MR2327514}, the convergence structure of $\CS{\mathbb{R}}$ is topological and is induced by the compact-open topology.  We adapt a metric discussed in Willard \cite{MR2048350} to obtain the metric
\begin{equation*}
    d(f,g)
        = \sum_{n = 1}^{\infty} \frac{1}{2^{n}} \cdot \frac{d_{n}(f,g)}{1 + d_{n}(f,g)},
\end{equation*}
in which $d_{n}(f,g) = \sup \{|f(x) - g(x)| \ST x = [-n,n] \}$, from which it follows that $\phi$ is a homeomorphism.
\end{proof}

\begin{proposition}\label{PrpDiffEqls}
Let $X$ be a convergence space, let $Y$ be a $T_{1}$ convergence space, let $\D{X}{Y}$ be a subspace of $\CC{X}{Y}$, let $L \in \D{X}{Y}$, let $f:X \rightarrow Y$ be a function, and let $a \in X$.  If $L$ is a differential of $f$ at $a$, then $f(a) = L(a)$.
\end{proposition}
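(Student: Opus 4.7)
The plan is to specialize the differential condition to the finest possible filter converging to $a$, namely the point filter $[a]$, and then push the resulting information through the continuous evaluation map at $a$.

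Concretely, I would first instantiate Definition \ref{DefDffrntls} with $\F{A} = [a]$. Since every point filter converges to its defining point, $[a] \CV a$, so there exists some $\F{L} \CV L$ in $\D{X}{Y}$ such that for every $K \in \F{L}$ there is an $A \in [a]$ with $f(x) \in K \cdot \{x\}$ for all $x \in A$. The key observation is that every $A \in [a]$ contains $a$, so applying the condition at the point $x = a$ itself yields $f(a) \in K \cdot \{a\} = \{k(a) \ST k \in K\}$ for every $K \in \F{L}$.

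Next, I would invoke the Cartesian closedness of \textbf{CONV}: the continuous convergence structure on $\CC{X}{Y}$ makes the evaluation map $\mathrm{ev}_{a} : \CC{X}{Y} \rightarrow Y$, $g \mapsto g(a)$, continuous. Since $\D{X}{Y}$ is a subspace of $\CC{X}{Y}$ and $\F{L} \CV L$ there, continuity of $\mathrm{ev}_{a}$ gives $\mathrm{ev}_{a}(\F{L}) \CV L(a)$ in $Y$. A filter base for $\mathrm{ev}_{a}(\F{L})$ consists of the sets $\{k(a) \ST k \in K\}$ for $K \in \F{L}$, and by the previous paragraph each such set contains $f(a)$. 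Thus every element of $\mathrm{ev}_{a}(\F{L})$ contains $f(a)$, which means $\mathrm{ev}_{a}(\F{L})$ is coarser than the point filter $[f(a)]$. By the standard axiom that a finer filter converges wherever a coarser one does, $[f(a)] \CV L(a)$ in $Y$.

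Finally, I would use the $T_{1}$ hypothesis on $Y$, which (in the convergence-space sense) says that the point filter $[y]$ converges only to $y$. Applying this with $y = f(a)$ to the convergence $[f(a)] \CV L(a)$ forces $L(a) = f(a)$, as required. The only step that might need a sentence of care is the translation between ``every element of the image filter contains $f(a)$'' and ``$[f(a)]$ is finer than the image filter,'' but this is immediate from the definition of the point filter; everything else is a direct unfolding of Definition \ref{DefDffrntls} together with continuity of evaluation.
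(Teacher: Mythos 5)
Your proof is correct. The paper states Proposition \ref{PrpDiffEqls} without supplying a proof, and your argument---instantiating Definition \ref{DefDffrntls} at the point filter $[a]$, noting every member of any $A\in[a]$ contains $a$ so that $f(a)\in K\cdot\{a\}$ for all $K\in\F{L}$, pushing $\F{L}$ through the evaluation map to conclude $[f(a)] \CV L(a)$, and then invoking the $T_{1}$ property of $Y$---is precisely the natural argument the omitted proof would take, with every step justified by the axioms as stated in Section \ref{SecGrpsConv}.
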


To establish that we have constructed a conservative extension of elementary differential calculus---in other words, \emph{we have not changed the notion of differential in the context of elementary differential calculus}---we first note a simplification of Definition \ref{DefDffrntls} in the case that both $X$ and $\D{X}{Y}$ are pretopological.

\begin{lemma}\label{LmaDifPTopE}
If $X$ is a pretopological space, $Y$ is a convergence space, $\D{X}{Y}$ is a pretopological subspace of $\CC{X}{Y}$, $L \in \D{X}{Y}$, $f:X \rightarrow Y$ is a function, and $a \in X$, then $L$ is a differential of $f$ at $a$ if and only if for every neighborhood $V$ of $L$, there exists a neighborhood $U$ of $a$ such that $f(x) \in V \cdot \{x\}$ for every element $x$ in $U$.
\end{lemma}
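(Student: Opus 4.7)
The plan is to reduce the two filter-level quantifiers in Definition \ref{DefDffrntls} to quantifiers over neighborhood filters, exploiting the pretopological hypothesis on both $X$ and $\D{X}{Y}$. Recall that in a pretopological space, a filter $\F{F}$ converges to a point $p$ if and only if $\NB{p} \subseteq \F{F}$; in particular $\NB{p}$ itself converges to $p$ and is the coarsest such filter. This will let me trade filters for their defining neighborhood bases in both directions of the biconditional.

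For the forward implication, assume $L$ is a differential of $f$ at $a$, and fix a neighborhood $V$ of $L$. I would instantiate the outermost universal quantifier in Definition \ref{DefDffrntls} with $\F{A} = \NB{a}$, which converges to $a$ because $X$ is pretopological. The resulting $\F{L} \CV L$ must contain $\NB{L}$, since $\D{X}{Y}$ is pretopological, so $V \in \F{L}$. Applying the inner condition with $K = V$ yields some $A \in \NB{a}$---that is, a neighborhood $U := A$ of $a$---such that $f(x) \in V \cdot \{x\}$ for every $x \in U$, as required.

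For the converse, assume the neighborhood condition and let $\F{A} \CV a$ be arbitrary. I would take $\F{L} = \NB{L}$, which converges to $L$ by pretopology of $\D{X}{Y}$. For any $K \in \NB{L}$, that is, any neighborhood of $L$, the hypothesis produces a neighborhood $U$ of $a$ with $f(x) \in K \cdot \{x\}$ for all $x \in U$; since $X$ is pretopological and $\F{A} \CV a$, we have $\NB{a} \subseteq \F{A}$, so $U \in \F{A}$ and $A := U$ is the desired witness. The proof is essentially bookkeeping, and I anticipate no real obstacle; the only point requiring attention is that the pretopological hypothesis is used in each direction for the correct space---on $X$ to pass between $\NB{a}$ and an arbitrary $\F{A} \CV a$, and on $\D{X}{Y}$ to pass between $\NB{L}$ and an arbitrary $\F{L} \CV L$.
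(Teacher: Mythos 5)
Your argument is correct, and it is the natural one: the paper states Lemma \ref{LmaDifPTopE} without proof, and your reduction is exactly the bookkeeping the authors implicitly rely on. The two key facts you use---that in a pretopological space $\F{F} \CV p$ if and only if $\NB{p} \subseteq \F{F}$, and that $\NB{p}$ itself converges to $p$---are applied to the correct space in each direction ($\F{A} = \NB{a}$ in $X$ for the forward implication, $\F{L} = \NB{L}$ in $\D{X}{Y}$ for the converse), which is the only point where the argument could have gone wrong. Nothing further is needed.
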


In view of Proposition \ref{PrpAffnHome}, we see that $\mathbb{A}_{f,a}$, the space of all affine functions on $\mathbb{R}$ that equal $f(a)$ at $a$, is pretopological; thus Lemma \ref{LmaDifPTopE} applies to the case of functions on $\mathbb{R}$.  We therefore obtain:

\begin{theorem}\label{ThmDiffReal}
If $\mathbb{R}$ has the standard topology, $L \in \mathbb{A}_{f,a}$, $f$ is a function on $\mathbb{R}$, and $a \in \mathbb{R}$, then $L$ is a abstract differential of $f$ at $a$ if and only if $L$ is a classical differential of $f$ at $a$.
\end{theorem}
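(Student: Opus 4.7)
The plan is to apply Lemma \ref{LmaDifPTopE} together with the explicit homeomorphism $\phi$ from Proposition \ref{PrpAffnHome} to reduce the abstract differential condition on $L$ to the familiar $\varepsilon$-$\delta$ definition of the derivative.

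First, I would note that by Proposition \ref{PrpAffnHome} the space $\mathbb{A}_{f,a}$ is homeomorphic to $\mathbb{R}$ and hence pretopological, so Lemma \ref{LmaDifPTopE} applies: $L$ is an abstract differential of $f$ at $a$ if and only if for every neighborhood $V$ of $L$ in $\mathbb{A}_{f,a}$ there exists a neighborhood $U$ of $a$ in $\mathbb{R}$ with $f(x) \in V \cdot \{x\}$ for all $x \in U$. Writing $L = \phi(p)$, the homeomorphism $\phi$ transports the usual $\varepsilon$-balls about $p$ into a neighborhood basis $V_{\varepsilon} = \phi((p - \varepsilon, p + \varepsilon))$ of $L$, while the standard basis at $a$ consists of the open intervals $U_{\delta} = (a - \delta, a + \delta)$.

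Next, I would unpack $V_{\varepsilon} \cdot \{x\}$: since each element of $V_{\varepsilon}$ has the form $\lambda y.((y - a)q + f(a))$ with $q \in (p - \varepsilon, p + \varepsilon)$, evaluation at a point $x$ produces an open interval centered at $L(x) = (x - a)p + f(a)$ of radius $\varepsilon |x - a|$, degenerating to $\{f(a)\}$ when $x = a$. Consequently, for $x \neq a$ the containment $f(x) \in V_{\varepsilon} \cdot \{x\}$ is equivalent to
\[
\left| \frac{f(x) - f(a)}{x - a} - p \right| < \varepsilon,
\]
and for $x = a$ it is automatic because $L(a) = f(a)$.

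Finally, I would observe that the Lemma \ref{LmaDifPTopE} criterion, read through this translation, is precisely the statement that $\lim_{x \to a} (f(x) - f(a))/(x - a) = p$, which in turn is exactly the assertion that the affine function $L = \phi(p)$ is the classical differential of $f$ at $a$. The only mildly delicate step is confirming that the interval $V_{\varepsilon} \cdot \{x\}$ is symmetric about $L(x)$ regardless of the sign of $x - a$, but this is immediate from the identity $|(x - a)q - (x - a)p| = |x - a| \cdot |q - p|$. No genuine obstacle arises; the content of the theorem is really a careful bookkeeping of what the abstract definition says in this very concrete setting, and the equivalence follows in both directions from the displayed equivalence between the neighborhood condition and the classical difference-quotient estimate.
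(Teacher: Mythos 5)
Your proposal is correct and follows essentially the same route as the paper: both reduce the abstract condition via Lemma \ref{LmaDifPTopE} (using that $\mathbb{A}_{f,a}$ is pretopological by Proposition \ref{PrpAffnHome}) and then use the homeomorphism $\phi$ to identify $f(x) \in V_{\varepsilon}\cdot\{x\}$ with the difference-quotient estimate $|(f(x)-f(a))/(x-a) - p| < \varepsilon$. Your single chain of equivalences merely packages the paper's two explicit directions into one, and your explicit handling of the degenerate case $x = a$ is a small point the paper leaves implicit.
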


\begin{proof}
\ [$\Rightarrow$]  Let $\varepsilon > 0$.  Let $N_{\varepsilon}$ is the $\varepsilon$-neighborhood of $m$.  Using the homeomorphism $\phi$ given by Proposition \ref{PrpAffnHome}, let $L = \phi(m)$ and $V = \phi(N_{\varepsilon})$.  By hypothesis, there exists $\delta > 0$ such that for every $x$ within $\delta$ of $a$ there is an affine function $k(h) = (h - a)m' + f(a)$ in $V$ such that $k(x) = f(x)$.  Since $k \in V$, it follows that $m' \in N_{\varepsilon}$.  Because $k(x) = f(x)$, it follows that $m' = (f(x) - f(a))/(x - a)$.  Thus
\begin{equation*}
    |(f(x) - f(a))/(x - a) - m|
        = |m' - m|
        < \varepsilon
\end{equation*}
whenever $|x - a| < \delta$, which implies that $m = \displaystyle{\lim_{x \rightarrow a} \frac{f(x) - f(a)}{x - a} = f'(a)}$, and so $L$ is a classical differential of $f$ at $a$.

[$\Leftarrow$]  Let $V$ be a neighborhood of $L(h) = (h - a)m + f(a)$.  There exists a sufficiently small $\varepsilon > 0$ such that the $\varepsilon$-neighborhood of $m$ is a subset of the $\phi$-image of $V$, where $\phi$ is the homeomorphism given by Proposition \ref{PrpAffnHome}.  By hypothesis, there exists $\delta > 0$ such that $|(f(x) - f(a))/(x - a) - m| < \varepsilon$ if $|x - a| < \delta$.  Again by Proposition \ref{PrpAffnHome}, the function $k(h) = (h - a)(f(x) - f(a))/(x - a) + f(a)$ belongs to $V$.  Since $k(x) = f(x)$, it follows that $L$ is a abstract differential of $f$ at $a$.
\end{proof}

A corollary to Proposition \ref{PrpAffnHome} is that the space of linear functions on $\mathbb{R}$ is homeomorphic to $\mathbb{R}$, and thus to $\mathbb{A}_{f,a}$ for each function $f$ on $\mathbb{R}$ and for each real number $a$.  Thus, classical differentials are, by homeomorphism, the differentials of elementary differential calculus. 

\begin{theorem}[Chain Rule]\label{ThmChainRul}
Let $X$, $Y$, and $Z$ be convergence spaces, let $a \in X$, let $f:Y \rightarrow Z$ be a function, and let $g: X \rightarrow Y$ be a function continuous at $a$.  If $L_{f}$ is a differential of $f$ at $g(a)$ and $L_{g}$ is a differential of $g$ at $a$, then $L_{f} \circ L_{g}$ is a differential of $f \circ g$ at $a$.
\end{theorem}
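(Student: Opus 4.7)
The plan is to take an arbitrary filter $\F{A} \CV a$ in $X$ and exhibit a filter on $\D{X}{Z}$ converging to $L_{f} \circ L_{g}$ that witnesses Definition~\ref{DefDffrntls} for $f \circ g$.  First I would apply the differential hypothesis for $L_{g}$ against $\F{A}$ to obtain a filter $\F{L}_{g} \CV L_{g}$ in $\D{X}{Y}$ with the required approximation property.  Because $g$ is continuous at $a$, the image filter $g(\F{A})$ converges to $g(a)$ in $Y$, so I may then apply the differential hypothesis for $L_{f}$ at $g(a)$ against $g(\F{A})$ to produce a filter $\F{L}_{f} \CV L_{f}$ in $\D{Y}{Z}$ with its own approximation clause.

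Next, I would take $\F{L}$ to be the filter on $\D{X}{Z}$ generated by the sets
\[
    K_{f} \circ K_{g} = \{ k_{f} \circ k_{g} \ST k_{f} \in K_{f},\ k_{g} \in K_{g} \},
\]
as $K_{f}$ ranges over $\F{L}_{f}$ and $K_{g}$ over $\F{L}_{g}$.  Since \textbf{CONV} is Cartesian closed, composition of continuous functions is itself a continuous operation; combined with the standing convention that the family $\{\D{X}{Y}\}$ is chosen so that compositions of differentials are again differentials, this ensures that the product filter $\F{L}_{f} \times \F{L}_{g}$, which converges to $(L_{f}, L_{g})$, pushes forward under composition to $\F{L}$, which accordingly converges to $L_{f} \circ L_{g}$ in $\D{X}{Z}$.

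To verify the approximation clause, I would take any $K \in \F{L}$, choose $K_{f} \in \F{L}_{f}$ and $K_{g} \in \F{L}_{g}$ with $K_{f} \circ K_{g} \subseteq K$, and apply the two approximation clauses to obtain $A_{1}, A_{2} \in \F{A}$ such that $g(x) \in K_{g} \cdot \{x\}$ for every $x \in A_{1}$ and $f(g(x)) \in K_{f} \cdot \{g(x)\}$ for every $x \in A_{2}$.  Setting $A = A_{1} \cap A_{2} \in \F{A}$, for any $x \in A$ I would pick $k_{g} \in K_{g}$ with $g(x) = k_{g}(x)$ and then $k_{f} \in K_{f}$ with $f(g(x)) = k_{f}(g(x))$, so that $(f \circ g)(x) = k_{f}(k_{g}(x)) = (k_{f} \circ k_{g})(x) \in (K_{f} \circ K_{g}) \cdot \{x\} \subseteq K \cdot \{x\}$, as required.

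I expect the main obstacle to lie not in the pointwise chaining, which is routine, but in justifying that $\F{L}$ really converges to $L_{f} \circ L_{g}$ inside the chosen subspace $\D{X}{Z}$ rather than merely inside $\CC{X}{Z}$.  This hinges on the coherent choice of the family $\{\D{X}{Y}\}$ so that composition restricts appropriately, together with the continuity of composition guaranteed by the Cartesian closedness of \textbf{CONV}.
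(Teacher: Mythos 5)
Your proposal is correct and follows essentially the same route as the paper: both construct the witness filter on $\D{X}{Z}$ from the sets $\{k_{f}\circ k_{g}\}$ and chain the two approximation clauses pointwise on an intersection of sets from $\F{A}$. The only cosmetic difference is that the paper verifies $\F{L}\CV L_{f}\circ L_{g}$ by a direct computation with the continuous convergence structure ($\F{L}\cdot\F{F}$ refines $\F{H}\cdot(\F{K}\cdot\F{F})$), whereas you invoke the continuity of the composition map guaranteed by Cartesian closedness, which amounts to the same calculation.
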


\begin{proof}
Let $\F{A} \CV a$.  Since $L_{g}$ is a differential of $g$ at $a$, there exists a filter $\F{K}$ converging to $L_{g}$ such that for every element $K$ in $\F{K}$, there exists a set $A$ in $\F{A}$ such that for every element $x$ in $A$, there exists a function $k$ in $K$ such that $k(x) = g(x)$.  Since $g$ is continuous at $a$, it follows that $g(\F{A}) \CV g(a)$; thus, by the hypothesis that $L_{f}$ is a differential of $f$ at $g(a)$, there exists a filter $\F{H}$ converging to $L_{g}$ such that for every element $H$ in $\F{H}$, there exists a set $B$ in $g(\F{A})$ such that for every element $x$ in $B$, there exists a function $h$ in $H$ such that $h(x) = f(x)$.

Now define the filter $\F{L} = [\{h \circ k \ST h \in H \text{ and } k \in K\} \ST H \in \F{H} \text{ and } K \in \F{K}]$.  If $\F{F} \CV x$, then $\F{H} \cdot (\F{K} \cdot \F{F}) \CV (L_{f} \circ L_{g})(x)$.  Since $\F{L} \cdot \F{F}$ includes $\F{H} \cdot (\F{K} \cdot \F{F})$, it follows that $\F{L} \CV L_{f} \circ L_{g}$.  If $L \in \F{L}$, then $L \supseteq \{h \circ k \ST h \in H \text{ and } k \in K\}$ for some $H \in \F{H}$ and $K \in \F{K}$.  Thus, there exists a set $A$ in $\F{A}$ such that for every element $x$ in $A$, there exists a function $k$ in $K$ such that $k(x) = g(x)$; likewise, there exists a set $B$ in $g(\F{A})$ such that for every element $x$ in $B$, there exists a function $h$ in $H$ such that $h(x) = f(x)$.  If $x \in A \cap g^{-1}(B)$, then there exist $k \in K$ and $h \in H$ such that $(h \circ k)(x) = h(k(x)) = h(g(x)) = f(g(x)) = (f \circ g)(x)$.  Therefore, we conclude that $L_{f} \circ L_{g}$ is a differential of $f \circ g$ at $a$.
\end{proof}

Theorem \ref{ThmChainRul} is critical to the theory of differential calculus on convergence spaces.  Suppose that a convergence space $D$ results from discretizing, by some method, a continuous structure, for example, the Euclidean line $\mathbb{R}$.  Moreover, suppose that this discretization is also continuous, in the sense that it \emph{coarsens} $\mathbb{R}$, that is, the identity function $\iota:\mathbb{R} \rightarrow D$ is continuous.  By Theorem \ref{ThmChainRul}, it follows that if $L_{f}$ is a differential of $f:D \rightarrow \mathbb{R}$ at $a$ and $L_{\iota}$ is a differential of $\iota$ at $a$, then $L_{f} \circ L_{i}$ is a differential of $f \circ \iota:\mathbb{R} \rightarrow \mathbb{R}$ at $a$.  In view of the requirement that $\D{X}{Y}$ must be a subspace of $\CC{X}{Y}$ so that the composition of differentials is always a differential, we see that differential calculus for continuous-valued functions on a discrete structure seamlessly coincides with elementary differential calculus.

\section{Differential Calculus on Groups}\label{SecCayley}

Let $C$ be a Cayley graph for a group $G$ generated by $\Gamma$; let $D$ be a Cayley graph for a group $H$ generated by $\Delta$.  The differentials of classical analysis are continuous linear maps:  they preserve both the topological structure and vector space structure of Euclidean spaces.  Thus, we choose $\D{C}{D}$ to be that subspace of $\CC{C}{D}$ the members of which preserve both the graph structure of $C$ and the group structure of $G$, that is, the continuous group homomorphisms.

Lemmas \ref{LmaCntGrpHm} and \ref{LmaCnvGrpHm} together specify the convergence structure of the subspace of $\CC{C}{D}$ the members of which are also group homomorphisms.  Lemma \ref{LmaDiffFint} is a special case of Lemma \ref{LmaDifPTopE}, in which both the domain and codomain are reflexive digraphs.  Theorem \ref{ThmDiffGrps}, which states equivalent criteria for differentiability of functions between groups, requires the introduction of some terminology.

\begin{lemma}\label{LmaCntGrpHm}
Let $C$ be a Cayley graph for a group $G$ generated by $\Gamma$; let $D$ be a Cayley graph for a group $H$ generated by $\Delta$.  A group homomorphism $\phi:G \rightarrow H$ is continuous if and only if $\phi(\N{e_{G}}) \subseteq \N{e_{H}}$ (that is, $\phi$ is continuous at $e_{G}$).
\end{lemma}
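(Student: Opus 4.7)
The plan is to reduce the problem to Proposition \ref{PrpDigCntPt}, which tells us that continuity of $\phi$ at a vertex $v$ is equivalent to $\phi(\N{v}) \subseteq \N{\phi(v)}$. So ``continuous'' (unqualified) means this containment holds at every $v \in G$, while the right-hand side of the biconditional is precisely this containment at $v = e_{G}$ (noting that $\phi(e_{G}) = e_{H}$ because $\phi$ is a group homomorphism). The forward direction is then immediate: continuity everywhere implies continuity at the identity.

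For the reverse direction, the key observation is that the Cayley graph $C$ is left-homogeneous: by Definition \ref{DefCylyGrph}, $\N{e_{G}} = \Gamma \cup \{e_{G}\}$, and more generally $\N{v} = \{v\gamma \ST \gamma \in \Gamma \cup \{e_{G}\}\} = v \cdot \N{e_{G}}$ for every $v \in G$; the analogous identity holds in $D$. First I would record this formula for the graph neighborhoods, then compute, for an arbitrary $v \in G$,
\begin{equation*}
    \phi(\N{v}) = \phi(v \cdot \N{e_{G}}) = \phi(v) \cdot \phi(\N{e_{G}}) \subseteq \phi(v) \cdot \N{e_{H}} = \N{\phi(v)},
\end{equation*}
where the second equality uses that $\phi$ is a homomorphism and the inclusion uses the hypothesis $\phi(\N{e_{G}}) \subseteq \N{e_{H}}$. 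Applying Proposition \ref{PrpDigCntPt} at $v$, this yields continuity of $\phi$ at $v$; since $v$ was arbitrary, $\phi$ is continuous.

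There is no real obstacle: the content of the lemma is simply that the Cayley graph structure is translation-invariant, so a homomorphism that respects graph neighborhoods at one point automatically respects them at every point. The only step requiring any care is verifying the formula $\N{v} = v \cdot \N{e_{G}}$ directly from Definition \ref{DefCylyGrph}, which I would do by noting that $(v,u) \in E$ iff $u = v\gamma$ for some $\gamma \in \Gamma \cup \{e_{G}\}$.
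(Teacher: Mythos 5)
Your proof is correct and is the standard argument: the paper states this lemma without proof (deferring to \cite{dP14}), and the translation-invariance identity $\N{v} = v\cdot\N{e_{G}}$ combined with Proposition \ref{PrpDigCntPt} is exactly the intended reduction. No gaps.
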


\begin{lemma}\label{LmaCnvGrpHm}
Let $C$ be a Cayley graph for a group $G$ generated by $\Gamma$; let $D$ be a Cayley graph for a group $H$ generated by $\Delta$.  If $\phi$ and $\psi$ are distinct elements of $\D{C}{D}$, then $\phi \in \N{\psi}$ if and only if there exists a unique $\delta \in \Delta$ of order 2 such that $\phi(C) \cup \psi(C) \subseteq \{ e_{H}, \delta \}$.
\end{lemma}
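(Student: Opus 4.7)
The plan is to apply Proposition~\ref{PrpCCnvReDi} to translate $\phi \in \N{\psi}$ into an algebraic condition on values of $\phi$ and $\psi$, and then extract the conclusion using the non-redundancy of $\Delta$. Unpacking the definitions: $\phi \in \N{\psi}$ iff for every $b \in G$ and every $\gamma \in \Gamma \cup \{e_G\}$ we have $\psi(b)^{-1}\phi(b\gamma) \in \Delta \cup \{e_H\}$. Setting $\tau(b) := \psi(b)^{-1}\phi(b)$ and using the homomorphism property, this is equivalent to the conjunction of $\tau(b) \in \Delta \cup \{e_H\}$ for every $b \in G$ and $\tau(b)\phi(\gamma) \in \Delta \cup \{e_H\}$ for every $b \in G$ and $\gamma \in \Gamma$.

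The backward direction is then immediate: if $\phi(C) \cup \psi(C) \subseteq \{e_H, \delta\}$ with $\delta \in \Delta$ of order $2$, then $\tau(b)$, $\phi(\gamma)$, and $\tau(b)\phi(\gamma)$ all lie in the subgroup $\{e_H, \delta\} \subseteq \Delta \cup \{e_H\}$.

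For the forward direction, assume $\phi \in \N{\psi}$ and $\phi \neq \psi$. The decisive consequence of non-redundancy is the following: if $x, y \in \Delta$ and $xy \in \Delta \cup \{e_H\}$, then $xy = e_H$, for otherwise $xy$ is a generator distinct from both $x$ and $y$, hence expressed as a product of two other generators. Since $\phi \neq \psi$, some $b^*$ satisfies $\tau(b^*) = \delta_0 \in \Delta$; applying the observation with $x = \delta_0$ and $y = \phi(\gamma)$ (noting $\phi(\gamma) \in \Delta \cup \{e_H\}$ by Lemma~\ref{LmaCntGrpHm}) yields $\phi(\Gamma) \subseteq \{e_H, \delta_0^{-1}\}$, and symmetrically $\tau(G) \subseteq \{e_H, \delta_0\}$ once $\phi$ is known to be nontrivial.

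To see $\delta_0$ has order $2$ (in the nontrivial-$\phi$ case), some $\phi(\gamma^{*}) = \delta_0^{-1}$ forces $\delta_0^{-1} \in \Delta$; if $n := \mathrm{ord}(\delta_0) \geq 3$, then $\delta_0^{-1} = \delta_0^{n-1}$ expresses a generator as a product of $n - 1 \geq 2$ copies of the distinct generator $\delta_0$, again contradicting non-redundancy. With $\delta_0^2 = e_H$ in hand, $\delta_0^{-1} = \delta_0$, and a brief case analysis on $\tau(\gamma) = \psi(\gamma)^{-1}\phi(\gamma) \in \{e_H, \delta_0\}$ gives $\psi(\gamma) = \phi(\gamma)\tau(\gamma) \in \{e_H, \delta_0\}$ for every $\gamma \in \Gamma$; since $\Gamma$ generates $G$ and $\{e_H, \delta_0\}$ is a subgroup of $H$, we conclude $\phi(C) \cup \psi(C) \subseteq \{e_H, \delta_0\}$. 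The trivial-$\phi$ case is handled analogously, starting from $\tau(b) = \psi(b)^{-1}$ with $\psi$ in the role formerly played by $\phi$. Uniqueness of $\delta_0$ follows because $\phi \neq \psi$ forces the combined image to contain a nonidentity element. I expect the main obstacle to be executing the case analysis cleanly: the relation $\phi \in \N{\psi}$ is asymmetric in its arguments, so the trivial-$\phi$ and nontrivial-$\phi$ branches require separate, parallel treatment.
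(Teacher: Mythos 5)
Your overall route---translating $\phi \in \N{\psi}$ through Proposition \ref{PrpCCnvReDi} into the two conditions $\tau(b) \in \Delta \cup \{e_{H}\}$ and $\tau(b)\phi(\gamma) \in \Delta \cup \{e_{H}\}$, and then squeezing these against non-redundancy---is essentially the paper's, merely reorganized around the global difference function $\tau$ instead of a single generator at which $\phi$ and $\psi$ disagree. The backward direction, the containments $\phi(\Gamma) \subseteq \{e_{H},\delta_{0}^{-1}\}$ and $\tau(G) \subseteq \{e_{H},\delta_{0}\}$, and the uniqueness claim are all sound.

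There is, however, a genuine gap at the order-$2$ step. At that point the only facts in play are $\delta_{0} \in \Delta$ and $\delta_{0}^{-1} \in \Delta$, and the argument ``if $n := \mathrm{ord}(\delta_{0}) \geq 3$ then $\delta_{0}^{-1} = \delta_{0}^{n-1}$ violates non-redundancy'' tacitly assumes $n$ is finite. If $\delta_{0}$ has infinite order, then $\delta_{0}^{-1}$ is not a positive power of $\delta_{0}$ and no contradiction materializes; and the two facts alone cannot rule this out, since a set such as $\{1,-1\}$ in $\mathbb{Z}$ is a non-redundant generating set (under the positive-product reading of ``product'' that your $\delta_{0}^{n-1}$ computation presupposes) containing an infinite-order element together with its inverse. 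To close the case you must bring in more of the hypothesis: for instance, since $\phi(\Gamma) \subseteq \{e_{H},\delta_{0}^{-1}\}$ and every element of $G$ is a product of generators, $\phi(G)$ consists of non-positive powers of $\delta_{0}$; but $\phi(G)$ is a subgroup containing $\delta_{0}^{-1}$, hence contains $\delta_{0}$, which forces $\delta_{0}$ to have finite order and lets your $n \geq 3$ argument finish. (The paper sidesteps the issue by evaluating the convergence condition at $a = \gamma^{2}$, $b = \gamma$, obtaining $\delta^{2} \in \N{e_{H}} = \{e_{H}\} \cup \Delta$ and discarding $\delta^{2} \in \Delta$ by non-redundancy.) A smaller point of the same kind: in the trivial-$\phi$ branch the product condition degenerates to $\tau(b) \in \Delta \cup \{e_{H}\}$ alone, so your ``observation'' has nothing to act on until you manufacture products, for example by comparing $\tau(b^{*})$ with $\tau(b^{*}b)$ and $\tau(b^{*2})$; ``handled analogously'' elides this.
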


\begin{proof}
\ [$\Rightarrow$]  Since $\phi \neq \psi$, there exists $\gamma \in \Gamma$ such that $\phi(\gamma) \neq \psi(\gamma)$.  By hypothesis $\phi(\gamma) \in \N{\psi(\gamma)}$; thus there exists $\delta \in \Delta$ such that $\phi(\gamma) = \psi(\gamma)\delta$.  Since $\Delta$ is non-redundant, either $\phi(\gamma) = e_{H}$ and $\psi(\gamma) = \delta$ or $\phi(\gamma) = \delta$ and $\psi(\gamma) = e_{H}$.  The former case implies that $\delta^{2} = \psi(\gamma)\delta = \phi(\gamma) = e_{H}$; likewise, the latter case implies that $\delta^{2} = \phi(\gamma^{2}) \in \N{\psi(\gamma)} = \N{e_{H}}$, and so $\delta^{2} = e_{H}$.  

Now consider any other $\gamma' \in \Gamma$.  Since $\gamma\gamma' \in \N{\gamma}$, it follows that $\phi(\gamma\gamma') \in \N{\psi(\gamma)}$.  If $\phi(\gamma) = e_{H}$, then $\phi(\gamma') \in \N{\delta}$, which implies that $\phi(\gamma')$ is either $e_{H}$ or $\delta$; likewise, if $\phi(\gamma) = \delta$, then $\delta\phi(\gamma') \in \N{e_{H}}$, which also implies that $\phi(\gamma')$ is either $e_{H}$ or $\delta$.  This establishes not only that $\delta$ is unique but also that $\{e_{H},\delta\}$ includes the images of $\phi$ and $\psi$.

[$\Leftarrow$]  The desired conclusion follows immediately from Proposition \ref{PrpCCnvReDi}
\end{proof}

\begin{lemma}\label{LmaDiffFint}
If $X$ and $Y$ are reflexive digraphs, $L \in \D{X}{Y}$, $f: X \rightarrow Y$ is a function, and $a \in X$, then $L$ is a differential of $f$ at $a$ if and only if for every $x \in \N{a}$, there exists $k \in \N{L}$ such that $k(x) = f(x)$.
\end{lemma}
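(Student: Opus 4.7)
The plan is to reduce the claim to Lemma \ref{LmaDifPTopE} by exploiting the fact that in a reflexive digraph every neighborhood filter is principal. First I would observe that the hypotheses put us squarely in the pretopological setting: $X$ is a reflexive digraph (hence pretopological), and since \textbf{REDI} is Cartesian closed (Proposition \ref{PrpCCnvReDi}), the function space $\CC{X}{Y}$ is a reflexive digraph; therefore its subspace $\D{X}{Y}$ inherits the reflexive digraph structure and is in particular pretopological. So Lemma \ref{LmaDifPTopE} applies, and $L$ is a differential of $f$ at $a$ if and only if for every neighborhood $V$ of $L$ in $\D{X}{Y}$ there is a neighborhood $U$ of $a$ in $X$ with $f(x) \in V\cdot\{x\}$ for every $x \in U$.

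The second step is to note that in a reflexive digraph the neighborhood filter at a point $p$ coincides with the principal filter $[\N{p}]$, so the neighborhoods of $p$ are precisely the supersets of $\N{p}$. In particular, $\N{a}$ is the smallest neighborhood of $a$ in $X$ and $\N{L}$ is the smallest neighborhood of $L$ in $\D{X}{Y}$. Hence the universal quantifier ``for every neighborhood $V$ of $L$'' in Lemma \ref{LmaDifPTopE} can be witnessed by its minimum element $V = \N{L}$, and likewise the existential ``there exists a neighborhood $U$ of $a$'' can be witnessed by $U = \N{a}$.

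From this reduction both directions are immediate. For ($\Rightarrow$), applying Lemma \ref{LmaDifPTopE} with $V = \N{L}$ yields a neighborhood $U \supseteq \N{a}$ of $a$ such that for every $x \in U$ there exists $k \in \N{L}$ with $k(x) = f(x)$; restricting attention to $x \in \N{a} \subseteq U$ gives the desired condition. For ($\Leftarrow$), given any neighborhood $V$ of $L$ we have $V \supseteq \N{L}$; taking $U = \N{a}$, for each $x \in U$ we obtain by hypothesis some $k \in \N{L} \subseteq V$ with $k(x) = f(x)$, so $f(x) \in V\cdot\{x\}$, verifying the criterion of Lemma \ref{LmaDifPTopE}.

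I do not expect any real obstacle here: the only point that requires care is confirming that $\D{X}{Y}$, as a subspace of $\CC{X}{Y}$, is itself pretopological (so that Lemma \ref{LmaDifPTopE} genuinely applies) and that its graph neighborhoods at $L$ are computed as $\N{L}$ in the sense used in the statement. Both follow from Cartesian closedness of \textbf{REDI} and the fact that subspaces of reflexive digraphs are reflexive digraphs.
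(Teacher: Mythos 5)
Your proposal is correct and follows exactly the route the paper intends: the text introduces the lemma as ``a special case of Lemma \ref{LmaDifPTopE}, in which both the domain and codomain are reflexive digraphs,'' and offers no further proof, so your reduction via the principality of the neighborhood filters (with $\N{a}$ and $\N{L}$ as smallest neighborhoods) is precisely the fleshed-out version of the paper's argument. The one point you flag for care---that $\D{X}{Y}$ is a pretopological subspace of $\CC{X}{Y}$---is handled correctly by Cartesian closedness of \textbf{REDI}.
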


\begin{theorem}\label{ThmDiffGrps}
If $C$ is a Cayley graph for a group $G$ generated by $\Gamma$, $D$ is Cayley graph for a group $H$ generated by $\Delta$, $L \in \D{C}{D}$, $f: C \rightarrow D$ is a function, and $a \in C$, then $L$ is a differential of $f$ at $a$ if and only if
\begin{enumerate}
	\item	$L$ is isolated and $f(a\gamma) = L(a)L(\gamma)$ for every $\gamma \in \N{e_{G}}$;
	\item	$L$ is not isolated but constant, $f(\N{a}) \subseteq \{ e_{H} \} \cup \{ \delta \in \Delta : \delta^{2} = e_{H} \}$, and $f(e_{G}) = e_{H}$ if $e_{G} \in \N{a}$; or
	\item	$L$ is not isolated and not constant, there exists a unique $\delta \in \Delta$ of order 2 such that $f(\N{a}) \subseteq L(C) = \{e_{H},\delta\}$, and for every $\gamma \in \N{e_{G}}$, if $a\gamma$ is $e_{G}$ or has odd order, then $f(a\gamma) = e_{H}$.
\end{enumerate}
\end{theorem}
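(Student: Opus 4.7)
The plan is to apply Lemma \ref{LmaDiffFint} to reduce the differential condition to the pointwise statement: for every $x \in \N{a}$ there exists $k \in \N{L}$ with $k(x) = f(x)$. The structure of the graph neighborhood $\N{L}$ in $\D{C}{D}$ is fully described by Lemma \ref{LmaCnvGrpHm}: either $\N{L} = \{L\}$, or every $k \in \N{L}$ distinct from $L$ satisfies $k(C) \cup L(C) \subseteq \{e_H, \delta\}$ for a single unique order-$2$ element $\delta \in \Delta$. I would then case-split on whether $L$ is isolated, non-isolated and constant, or non-isolated and non-constant, and in each case translate the pointwise agreement condition into the corresponding clause.

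For condition 1, $\N{L} = \{L\}$ forces $f(x) = L(x)$ on $\N{a}$; writing $x = a\gamma$ with $\gamma \in \N{e_G}$ and using that $L$ is a group homomorphism yields $f(a\gamma) = L(a)L(\gamma)$. For condition 2, a constant group homomorphism must send every element to $e_H$, so $L \equiv e_H$, and $\N{L}$ consists of $L$ together with every nontrivial homomorphism whose image lies in $\{e_H, \delta\}$ for some $\delta \in \Delta$ of order $2$; the image condition $f(\N{a}) \subseteq \{e_H\} \cup \{\delta \in \Delta : \delta^2 = e_H\}$ and the equation $f(e_G) = e_H$ fall out immediately from demanding pointwise agreement with some homomorphism in $\N{L}$, since every homomorphism fixes the identity.

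Condition 3 is the main obstacle. Here Lemma \ref{LmaCnvGrpHm} together with $L$ being both non-constant and non-isolated forces $L(C) = \{e_H, \delta\}$ for a unique order-$2$ element $\delta \in \Delta$, and every $k \in \N{L}$ satisfies $k(C) \subseteq \{e_H, \delta\}$, yielding the inclusion $f(\N{a}) \subseteq L(C)$. The odd-order clause captures a basic feature of homomorphisms into $\mathbb{Z}_{2}$: if $a\gamma$ has odd order $m$, then any $k \in \N{L}$ sends $a\gamma$ into $\{e_H, \delta\}$ and must satisfy $k(a\gamma)^m = e_H$, while $\delta^m = \delta$ for odd $m$, forcing $k(a\gamma) = e_H$ and hence $f(a\gamma) = e_H$; the case $a\gamma = e_G$ is immediate because homomorphisms fix the identity.

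For the converse directions in cases 2 and 3, the real work is to exhibit, for each $x \in \N{a}$ and each value $f(x)$ permitted by the hypotheses, a specific $k \in \N{L}$ realizing $k(x) = f(x)$. This is where the Cayley graph structure is essential: one constructs the required homomorphism by prescribing its values on the generating set $\Gamma$—sending appropriate generators to $\delta$ and the remainder to $e_H$, consistent with the odd-order obstruction identified above—and then applies Lemma \ref{LmaCnvGrpHm} to place the constructed homomorphism in $\N{L}$. The odd-order analysis is the technical heart of the argument, since it is precisely what reconciles the freedom to pick $k$ on generators with the rigidity imposed by the group relations in $G$.
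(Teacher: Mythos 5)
Your reduction via Lemma \ref{LmaDiffFint} and your description of $\N{L}$ via Lemma \ref{LmaCnvGrpHm} are the right starting points, and the necessity directions in all three cases are sound (including the observation that every homomorphism fixes $e_{G}$ and kills odd-order elements whenever its image lies in a two-element subgroup). The gap is in the sufficiency direction for cases 2 and 3, which you yourself identify as the heart of the argument but only gesture at. You propose to build the witnessing $k \in \N{L}$ ``by prescribing its values on the generating set $\Gamma$,'' but a map $\Gamma \rightarrow \{e_{H},\delta\}$ extends to a group homomorphism on $G$ only if it respects \emph{every} relation of $G$, not just the relations recorded in the orders of single elements. A homomorphism with image in $\{e_{H},\delta\} \cong \mathbb{Z}_{2}$ necessarily annihilates the entire subgroup generated by squares and commutators, so an element $a\gamma$ of even order can still be unreachable: in the quaternion group $Q_{8}$ (generated non-redundantly by $i$ and $j$), the element $-1 = i\cdot i \in \N{i}$ has order $2$ yet is a square, hence is sent to $e_{H}$ by every homomorphism into a two-element group. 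Consequently ``$a\gamma$ is $e_{G}$ or has odd order'' is not visibly the exact obstruction, and your construction of some $k$ with $k(a\gamma) = \delta$ for the remaining $a\gamma$ is not justified as written.

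A second, related problem: the necessity argument you give for the odd-order clause in case 3 applies verbatim in case 2, since there too every nonconstant $k \in \N{L}$ has image in a two-element subgroup; so if some $a\gamma \in \N{a}$ has odd order and $f(a\gamma) = \delta \neq e_{H}$, no witness exists, yet clause 2 as you state it imposes no such restriction, so your claimed converse in case 2 cannot go through for every $f$ satisfying those hypotheses. To close these gaps you must either (i) prove that for each $x \in \N{a}$ not excluded by the stated clauses there really is a continuous homomorphism $G \rightarrow H$ with image in $\{e_{H},\delta\}$ taking the prescribed value at $x$ (which amounts to locating $x$ relative to the kernel of the quotient of $G$ by squares and commutators), or (ii) isolate the additional hypotheses under which the generator prescription does extend. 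Note also that any $k$ you construct must be verified continuous via Lemma \ref{LmaCntGrpHm} before Lemma \ref{LmaCnvGrpHm} can place it in $\N{L}$.
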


Since the edge relation on $\D{C}{D}$ is symmetric, it follows that a differential of $f:C \rightarrow D$ at $a$ need not be unique if it is neither isolated nor constant.  Uniqueness of differentials, albeit not required by Definition \ref{DefDffrntls}, is certainly a desideratum.  These non-unique differentials, however, contain the same information about the behavior of a function on the graph neighborhood at the point of differentiation: no change within a certain tolerance (in other words, movement on a two-cycle).  Unless the generating set for $D$ has an element of order 2 and $C$ has an element of even order if $C$ is finite, then $\D{C}{D}$ will contain only isolated points, that is, it will be a discrete space.  If a differential of $f:C \rightarrow D$ at $a$ is isolated, it is indeed unique. 

A classical differential of a function at a point is an affine approximation to the function that is identical to the function at that point.  Although this approximation can be made arbitrarily close to the actual function on a sufficiently small neighborhood, it is not usually identical to the function (unless the function itself is linear on some neighborhood of the point).  Since every point of a reflexive digraph, however, does have a smallest neighborhood---namely its graph neighborhood---we expect that a differential of a function on a reflexive digraph at a point to coincide with that function on the graph neighborhood of that point.  Lemma \ref{LmaDiffFint}, without any \emph{a priori} restrictions on the choice of differentials, guarantees only that a differential of a function at a point is a ``fuzzy'' approximation to that function on the graph neighborhood of that point; Theorem \ref{ThmDiffGrps}, due to the constraint that differentials are group homomorphisms, requires that the isolated differential of a function at a point is identical to that function on the graph neighborhood of that point.

If $L$ is a differential of $f$ at $a$ and $L$ is either isolated or not constant, then $f$ is continuous at $a$; otherwise, it is possible for $f$ to be discontinuous at $a$---see Example \ref{ExmBadBlean}.

\begin{example}\label{ExmIntegers}
The results of this paper enable us to obtain a differential calculus on the additive group of integers.  Identify $\mathbb{Z}$ with the Cayley graph for $\mathbb{Z}$ generated by $\{1\}$, as shown in Figure \ref{FigIntegers}.
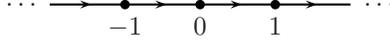
\begin{figure}
\begin{center}
\begin{pspicture}(0,0)(4,1)
    \psdots(1,0)(2,0)(3,0)
    \psset{ArrowInside=->}
    \psline(0,0)(1,0)
    \psline(1,0)(2,0)
    \psline(2,0)(3,0)
    \psline(3,0)(4,0)
    \uput[180](0,0){$\ldots$}
    \uput[270](1,0){$-1$}
    \uput[270](2,0){$0$}
    \uput[270](3,0){$1$}
    \uput[0](4,0){$\ldots$}
\end{pspicture}
\end{center}
\caption{A Cayley Graph of $\mathbb{Z}$.}\label{FigIntegers}
\end{figure}
Lemma \ref{LmaCnvGrpHm} implies that $\D{\mathbb{Z}}{\mathbb{Z}}$ is discrete and contains only the constant function $\lambda n.0$ and the identity function.  From Theorem \ref{ThmDiffGrps} it follows that $f:\mathbb{Z} \rightarrow \mathbb{Z}$ is differentiable at $n$ if and only if $f(n) = f(n + 1) = 0$ or $f(n) = n$ and $f(n + 1) = n + 1$.
\end{example}

\begin{example}\label{ExmIntPlane}
Let $C$ be a Cayley graph for a group $G$ generated by $\Gamma$.  A Cayley graph for $G \oplus G$ is the graph Cartesian product $C \times C$ generated by $(\Gamma \times \{e\}) \cup (\{e\} \times \Gamma)$.  Define the function $\ast : C \times C \rightarrow C$ by $\ast(a,b) = ab$ for all elements $a$ and $b$ of $C$.  If $G$ is abelian, then $\ast$ is continuous.  In fact, if $C$ is the Cayley graph of Figure \ref{FigIntegers}, then $\ast$ is differentiable everywhere since it belongs to $\D{\mathbb{Z}^{2}}{\mathbb{Z}}$.  The other three elements of $\D{\mathbb{Z}^{2}}{\mathbb{Z}}$ are the constant function $\lambda n.0$ and the two projections.
\end{example}

\begin{example}\label{ExmDiagonal}
Consider the diagonal function on a non-trivial Cayley graph $C$, that is, the function $d : C \rightarrow C \times C$ defined by $d(a) = (a,a)$ for each $a \in C$.  If $a \in C$, then $d(\N{a}) \cap \N{d(a)} \subseteq \{(a,a)\}$, which implies that $d(\N{a}) \not \subseteq \N{d(a)}$.  Thus $d$ is nowhere continuous.  Nor is $d$ differentiable anywhere.  To the contrary, if $d$ were differentiable at $a$, then $d(\N{a}) \subseteq \{(e,e)\} \cup \{(\gamma,e),(e,\gamma) \ST \gamma \in \Gamma \text{ and } \gamma^{2} = e \}$, from which it follows that $a = e$.  But then $\gamma \in \Gamma$ implies that $(\gamma,\gamma) \in \N{(e,e)}$, which is absurd.
\end{example}

\section{Application: Boolean Differential Calculus}\label{SecBoolean}

We now consider, as an application of Section \ref{SecCayley}, a Boolean differential calculus.

\begin{definition}\label{DefBoolCube}
For every positive integer $n$, the \emph{$n$-Boolean hypercube}, denoted by $B^{n}$, is the Cayley graph for the group $\bigoplus_{i = 1}^{n} \mathbb{Z}_{2}$, in which $\{1\}$ is the generator set of $\mathbb{Z}_{2}$.  
\end{definition}

In other words, the $n$-Boolean hypercube is the Cartesian product $\prod_{i = 1}^{n} \{0,1\}$ equipped with the reflexive digraph convergence structure defined by the condition: $b \in \N{a}$ if and only if there exists at most one $1 \leq i \leq n$ such that $\pi_{i}(b) \neq \pi_{i}(a)$.  For example, Figure \ref{FigBoolCube} depicts the 3-Boolean hypercube.

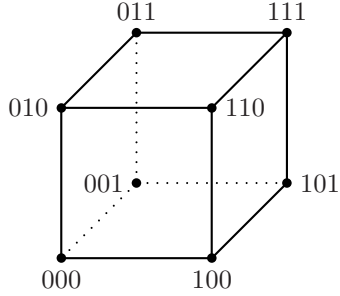
\begin{figure}
\begin{center}
\begin{pspicture}(-1,-1)(4,4)
    \psline(0,0)(2,0)
    \psline(0,2)(2,2)
    \psline(1,3)(3,3)
    \psline(0,0)(0,2)
    \psline(2,0)(2,2)
    \psline(3,1)(3,3)
    \psline(0,2)(1,3)
    \psline(2,0)(3,1)
    \psline(2,2)(3,3)
    \psset{linestyle=dotted}
    \psline(0,0)(1,1)
    \psline(1,1)(3,1)
    \psline(1,1)(1,3)
    \psdots(0,0)(0,2)(1,1)(1,3)(2,0)(2,2)(3,1)(3,3)
    \uput[270](0,0){$000$}
    \uput[180](0,2){$010$}
    \uput[180](1,1){$001$}
    \uput[90](1,3){$011$}
    \uput[270](2,0){$100$}
    \uput[0](2,2){$110$}
    \uput[0](3,1){$101$}
    \uput[90](3,3){$111$}
\end{pspicture}
\end{center}
\caption{A Cayley Graph for $\mathbb{Z}_{2} \oplus \mathbb{Z}_{2} \oplus \mathbb{Z}_{2}$.}\label{FigBoolCube}
\end{figure}

According to Rudeanu \cite{MR2549641}, construction of a Boolean differential calculus was first attempted by Daniell \cite{MR1559990}.  Recently, Bazs{\'o} and L{\'a}bos \cite{MR2233259} suggest that Boolean differentials should be linear and and satisfy the Leibniz identity; no proposed Boolean differential, they claim, meets both criteria.  In \cite{MR2389716} and the present paper, however, abstract differentials always satisfy the chain rule, and hence the Leibniz identity if multiplication and addition are defined on the codomain.  Thus, if we restrict the space of differentials to continuous linear functions, then we satisfy the criteria of \cite{MR2233259}. 

Since the continuous linear functions on Boolean hypercubes are precisely the continuous group homomorphisms on Boolean hypercubes, we obtain a Boolean differential calculus by specializing Lemmas \ref{LmaCntGrpHm} and \ref{LmaCnvGrpHm} to Boolean hypercubes, from which the specialization of Theorem \ref{ThmDiffGrps} follows.  Because every point of a Boolean hypercube has order 2, these specializations have simple formulations.

\begin{lemma}\label{LmaBlCntLin}
A linear function $L:B^{m} \rightarrow B^{n}$ is continuous if and only if $L(\N{0_{m}}) \subseteq \N{0_{n}}$.
\end{lemma}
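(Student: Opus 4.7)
The plan is to obtain Lemma \ref{LmaBlCntLin} as an immediate specialization of Lemma \ref{LmaCntGrpHm}, once we observe that the notions of linearity and group homomorphism coincide for Boolean hypercubes. Because the underlying group of $B^{m}$ is the $\mathbb{F}_{2}$-vector space $\bigoplus_{i=1}^{m} \mathbb{Z}_{2}$, a function $L:B^{m}\rightarrow B^{n}$ is $\mathbb{F}_{2}$-linear if and only if it is additive, which in turn is precisely the condition that $L$ be a group homomorphism between the underlying additive groups of $B^{m}$ and $B^{n}$. The only scalars are $0$ and $1$, so the scalar multiplication clause in linearity is automatic, making this equivalence routine.

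With this identification in hand, I would invoke Lemma \ref{LmaCntGrpHm} applied to $G = \bigoplus_{i=1}^{m} \mathbb{Z}_{2}$ and $H = \bigoplus_{i=1}^{n} \mathbb{Z}_{2}$, whose Cayley graphs (with the generators specified in Definition \ref{DefBoolCube}) are $B^{m}$ and $B^{n}$ respectively. Since the identity elements of these groups are $0_{m}$ and $0_{n}$, Lemma \ref{LmaCntGrpHm} yields continuity of $L$ precisely when $L(\N{0_{m}}) \subseteq \N{0_{n}}$, which is the desired conclusion.

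There is essentially no obstacle to overcome; the only point requiring care is the explicit identification of $\mathbb{F}_{2}$-linear maps with group homomorphisms on $B^{m}$, so that Lemma \ref{LmaCntGrpHm}, which is stated for group homomorphisms, applies without modification. Accordingly, the proof will be a single short paragraph: state that any linear $L:B^{m} \rightarrow B^{n}$ is a group homomorphism, apply Lemma \ref{LmaCntGrpHm} with $e_{G} = 0_{m}$ and $e_{H} = 0_{n}$, and read off the equivalence.
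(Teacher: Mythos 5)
Your proposal is correct and matches the paper's own route: the paper obtains this lemma precisely by noting that linear functions on Boolean hypercubes coincide with group homomorphisms and then specializing Lemma \ref{LmaCntGrpHm} with $e_{G} = 0_{m}$ and $e_{H} = 0_{n}$. Nothing is missing.
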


\begin{lemma}\label{LmaBlLnConv}
If $L$ and $K$ are distinct linear functions in $\CC{B^{m}}{B^{n}}$, then $K \in \N{L}$ if and only if all nonzero columns of the matrix representations of $K$ and $L$ are identical.  
\end{lemma}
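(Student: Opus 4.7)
The plan is to specialize Lemma \ref{LmaCnvGrpHm} directly to the Boolean setting. In $B^{n}$ the generating set $\Delta$ consists of the $n$ standard basis vectors, every nontrivial element has order $2$, and the identity is $0_{n}$. Lemma \ref{LmaCnvGrpHm} therefore becomes: $K \in \N{L}$ if and only if there is a (necessarily unique, by the distinctness of $K$ and $L$) basis vector $\delta \in \Delta$ with $K(B^{m}) \cup L(B^{m}) \subseteq \{0_{n},\delta\}$.

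Next I would translate the image condition into a condition on matrix columns. The columns of the matrix of $L$ are exactly the values of $L$ on the generators of $B^{m}$; since $L$ is $\mathbb{Z}_{2}$-linear, its image equals the $\mathbb{Z}_{2}$-span of these columns, and the set $\{0_{n},\delta\}$ is itself a $\mathbb{Z}_{2}$-subspace of $B^{n}$. Consequently $L(B^{m}) \subseteq \{0_{n},\delta\}$ if and only if every column of $L$ lies in $\{0_{n},\delta\}$, which amounts to saying that every nonzero column of $L$ equals $\delta$; the same observation applies verbatim to $K$.

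Combining the two equivalences yields the lemma. Since $K$ and $L$ are distinct, at least one of their matrices has a nonzero column, so the common nonzero column value $\delta$ is uniquely determined and automatically lies in $\Delta$; this gives the forward direction. Conversely, if all nonzero columns of $K$ and $L$ coincide in a single vector $\delta$, then $\delta \in \Delta$ has order $2$ and both images are contained in $\{0_{n},\delta\}$, so Lemma \ref{LmaCnvGrpHm} delivers $K \in \N{L}$. I do not anticipate a genuine obstacle here; the argument is essentially a dictionary between the image-pair description of adjacency furnished by Lemma \ref{LmaCnvGrpHm} and the matrix language natural to linear maps on $\mathbb{Z}_{2}$-modules.
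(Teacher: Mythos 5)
Your proof is correct and follows exactly the route the paper intends: the paper gives no separate argument for this lemma, presenting it as the specialization of Lemma \ref{LmaCnvGrpHm} to Boolean hypercubes (where $\Delta$ is the standard basis and every nonzero element has order $2$), which is precisely your dictionary between the image-pair condition and the matrix columns. The only point worth making explicit is why the common nonzero column value automatically lies in $\Delta$ in the converse direction: since $K$ and $L$ belong to $\CC{B^{m}}{B^{n}}$, continuity (Lemma \ref{LmaBlCntLin}) forces each column, being the image of a generator, into $\N{0_{n}} = \{0_{n}\} \cup \Delta$.
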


\begin{theorem}\label{ThmDiffBool}
If $L \in \D{B^{m}}{B^{n}}$, $f: B^{m} \rightarrow B^{n}$ is a function, and $b \in B^{m}$, then $L$ is a differential of $f$ at $b$ if and only if 
\begin{enumerate}
	\item	$L$ is isolated and $L(x) = f(x)$ for every $x \in \N{b}$;
	\item	$L$ is not isolated but constant, $f(\N{b}) \subseteq \N{0_{n}}$, and $f(0_{m}) = 0_{n}$ if $0_{m} \in \N{b}$; or
	\item	$L$ is not isolated and not constant, there exists a unique $\beta \in \N{0_{n}} - \{0_{n}\}$ such that $f(\N{b}) \subseteq L(B^{m}) = \{0_{n},\beta\}$, and $f(0_{m}) = 0_{n}$ if $0_{m} \in \N{b}$.
\end{enumerate}
\end{theorem}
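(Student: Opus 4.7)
The plan is to derive Theorem \ref{ThmDiffBool} as a direct specialization of Theorem \ref{ThmDiffGrps} to the case $C = B^{m}$, $D = B^{n}$. Because Lemmas \ref{LmaBlCntLin} and \ref{LmaBlLnConv} are the Boolean-hypercube specializations of Lemmas \ref{LmaCntGrpHm} and \ref{LmaCnvGrpHm}, the space $\D{B^{m}}{B^{n}}$ and the notion of an ``isolated'' point in it coincide with those invoked by Theorem \ref{ThmDiffGrps}. Thus nothing remains except to rewrite each of the three clauses of Theorem \ref{ThmDiffGrps} in the language of Boolean hypercubes.

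Three elementary observations about $B^{m}$ and $B^{n}$ drive all of the simplifications. First, every non-identity element of $B^{n}$---in particular every element of $\Delta$---has order $2$, so the qualifier ``of order $2$'' is automatic and $\{\delta \in \Delta \ST \delta^{2} = 0_{n}\} = \Delta$. Second, the only element of $B^{m}$ of odd order is $0_{m}$, since every non-identity element has order $2$. Third, $\N{0_{n}} = \{0_{n}\} \cup \Delta$, and hence $\N{0_{n}} - \{0_{n}\} = \Delta$.

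Applying these observations to the three cases proceeds as follows. In case (1), because $L$ is a group homomorphism we have $L(b)L(\gamma) = L(b + \gamma)$, and as $\gamma$ ranges over $\N{0_{m}}$ the element $b + \gamma$ ranges precisely over $\N{b}$, so ``$f(b\gamma) = L(b)L(\gamma)$ for every $\gamma \in \N{0_{m}}$'' is equivalent to ``$L(x) = f(x)$ for every $x \in \N{b}$''. In case (2), the first and third observations turn $\{0_{n}\} \cup \{\delta \in \Delta \ST \delta^{2} = 0_{n}\}$ into $\N{0_{n}}$, while the auxiliary clause $f(0_{m}) = 0_{n}$ if $0_{m} \in \N{b}$ carries over unchanged. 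In case (3), the first and third observations identify a unique $\delta \in \Delta$ of order $2$ with a unique $\beta \in \N{0_{n}} - \{0_{n}\}$, and by the second observation the condition ``$b + \gamma$ is $0_{m}$ or has odd order'' collapses to ``$b + \gamma = 0_{m}$'', whence the final clause reduces to the single implication ``$f(0_{m}) = 0_{n}$ if $0_{m} \in \N{b}$''.

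The proof is essentially bookkeeping, and the only step that warrants a second look is the collapse in case (3): one must verify that the odd-order condition becomes genuinely meaningful rather than vacuous or over-restrictive. This is exactly what the second observation supplies by pinning down odd-order elements as the identity alone, thereby aligning the condition with the single-equation clause shared by case (2).
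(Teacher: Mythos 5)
Your proposal is correct and follows exactly the route the paper intends: the paper offers no separate proof of Theorem \ref{ThmDiffBool}, stating only that it is the specialization of Theorem \ref{ThmDiffGrps} to Boolean hypercubes via Lemmas \ref{LmaBlCntLin} and \ref{LmaBlLnConv}, ``because every point of a Boolean hypercube has order 2.'' Your three observations (order-2 generators, $0_{m}$ as the only odd-order element, and $\N{0_{n}} - \{0_{n}\} = \Delta$) supply precisely the bookkeeping the paper leaves implicit, so your write-up is in fact more complete than the paper's.
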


A corollary to Theorem \ref{ThmDiffBool} is that every function $f:B^{m} \rightarrow B$ is differentiable at every point not in $\N{0_{m}}$ and at every point in $\N{0_{m}}$ if $f(0_{m}) = 0$.

As the next example illustrates, Boolean differential calculus is easily mechanized since it is largely a matter of solving matrix equations.

\begin{example}\label{ExmDiffBool}
Define $f:B^{2} \rightarrow B^{3}$ by $f(p,q) = (p,(1 + p)(1 + q),q)$.  Since
\begin{equation*}
	\left(	\begin{array}{cc}
		1 & 0 \\
		0 & 0 \\
		0 & 1 
		\end{array}
	\right)
	\left(	\begin{array}{ccc}
		1 & 1 & 0 \\
		1 & 0 & 1 
		\end{array}
	\right)
	=
	\left(	\begin{array}{ccc}
		1 & 1 & 0 \\
		0 & 0 & 0 \\
		1 & 0 & 1
		\end{array}
	\right)
	=
	\left(	\begin{array}{ccc}
		f(1,1) & f(1,0) & f(0,1) 
		\end{array}
	\right),
\end{equation*}
it follows that $(p,q) \mapsto (p,0,q)$ is the differential of $f$ at $(1,1)$.  Likewise, the function $g:B^{3} \rightarrow B^{2}$ defined by $g(p,q,r) = ((1 + q)(1 + p + pr),(1+r)q)$ is differentiable at $(1,0,1)$; one of its differentials at $(1,0,1)$ is $(p,q,r) \mapsto (q + r,0)$.  By Theorem \ref{ThmChainRul}, it follows that $(g \circ f)(p,q) = (q,(1+p)(1+q))$ is differentiable at $(1,1)$; one of its differentials at $(1,1)$ is $(p,q) \mapsto (q,0)$.
\end{example}

\begin{example}\label{ExmBadBlean}
Consider the function $f:B^{3} \rightarrow B^{3}$ defined by 
\begin{equation*}
	f(p,q,r) = (p(1+q)(1+r),pr(1+q),r(1+p)(1+q)).
\end{equation*}
Since $f(\N{(1,0,1)}) = \N{0_{3}}$ and $f(0_{3}) = 0_{3}$, it follows that $f$ is differentiable at $(1,0,1)$.  This function, however, is \emph{not} continuous at $(1,0,1)$ since $f(\N{(1,0,1)}) \not \subseteq \N{f(1,0,1)}$.
\end{example}

\section{Conclusion}

After reviewing preliminary results about the convergence space representation of reflexive digraphs,  we modified the definition of \emph{differential} given in \cite{MR2389716} and verified that this modification is a conservative extension of differentiation, as it is in classical analysis, to arbitrary convergence spaces.  To obtain differential calculi on groups, we identified groups with non-redundant Cayley graphs, which can be represented by convergence spaces.  By means of Boolean hypercubes---that is, direct sums of $\mathbb{Z}_{2}$---we then obtained a Boolean differential calculus, in which differentials satisfy both linearity and the Leibniz identity.  

These results are part of our larger project to develop analysis on convergence spaces and, in particular, discrete structures.  One particular focus of this project involves the development of methods to discretize continuous structures.  The theory put forth in the present paper, along with certain results of \cite{dP14}, is a foundation for analysis on discretized continuous structures.  We intimated above an approach exploiting the property that the family of all convergence structures with the same carrier forms a complete lattice if ordered by set inclusion.  Although this approach has promise, it makes no use of the present work of analysis on groups.  An alternative approach, which does use the present work, involves tiling Euclidean space (for example, the plane) with regular polygons; this method, which has parallels with recent work (for example, in Miller, et al., \cite{wM14}) in general relativity, merits exploration.

\bibliographystyle{plain}
\bibliography{DiffGrps}
\end{document}